\pgfplotsset{compat=1.18} 
\newtheorem{theorem}{Theorem}
\newtheorem{proposition}{Proposition}
\theoremstyle{definition}
\newtheorem{definition}{Definition}
\theoremstyle{remark}
\newtheorem{remark}{Remark}
\begin{document}

\begin{frontmatter}
    \title{Managing cascading disruptions through optimal liability assignment%
    \tnoteref{t1}}
    \tnotetext[t1]{We are grateful to Lars Peter \O sterdal for insightful suggestions.
    We also thank participants at EURO 2024 (Copenhagen), SING19 (Besan\c con), and seminar attendees at VU Amsterdam and the Chinese University of Hong Kong for comments.
    Gudmundsson and Hougaard gratefully acknowledge financial support from the Independent Research Fund Denmark (grant no. 4260-00050B).} 
    
    \author[1]{Jens Gudmundsson}
    \ead{jg@ifro.ku.dk}
    
    \author[1]{Jens Leth Hougaard}
    \ead{jlh@ifro.ku.dk}
    
    \author[2]{Jay Sethuraman}
    \ead{jay@ieor.columbia.edu}
    
    \affiliation[1]{{Department of Food and Resource Economics, University of Copenhagen}}
    \affiliation[2]{{Department of Industrial Engineering and Operations Research, Columbia University}}
    \begin{abstract}
        Interconnected agents such as firms in a supply chain make simultaneous preparatory investments to increase chances of honouring their respective bilateral agreements.
        Failures cascade:
        if one fails their agreement, then so do all who follow in the chain.
        Thus, later agents' investments turn out to be pointless when there is an earlier failure.
        How losses are shared affects how agents invest to avoid the losses in the first place.
        In this way, a \emph{solution} sets agent liabilities depending on the point of disruption and induces a supermodular investment game.
        We characterize all efficient solutions.
        These have the form that later agents---who are not directly liable for the disruption---still shoulder some of the losses, justified on the premise that they \emph{might have failed} anyway.
        Importantly, we find that such indirect liabilities are necessary to avoid unbounded inefficiencies.
        Finally, we pinpoint one efficient solution with several desirable properties.
    \end{abstract}
    \begin{keyword}
        Liability assignment \sep Disruptions \sep Efficiency \sep Incentives \sep Supermodular games
    \end{keyword}
\end{frontmatter}

\section{Introduction} \label{SECT:intro}

Sequential systems of interconnected agents appear in many shapes and forms in the modern economy, be it physical or digital (sequential search, innovation, refinement, scheduling, and more).
Due to the very interconnectedness of such systems, they are vulnerable to disruptions with subsequent harmful ripple effects.   
Indeed, these indirect cascading effects may substantially exceed the disruption's direct impact \citep{InoueTodo2019}.
The economic literature abounds in examples of how such externalities if left unaddressed misalign incentives \citep[e.g.,][]{Libecap2014,Barnighausenetal2014}, but also in seminal, concrete proposals on how to correct for them \citep[][among others]{Plott1966,ShapleyShubik1969}.
We follow up on this line of research by proposing how to assign liability and share costs in sequential, interconnected systems.

A potential application of our study is supply chain management.
Here, disruptions pose significant challenges to businesses and result in economic losses both through lost sales and by reducing the company's market value \citep{HendricksSinghal2005,TokuiAl2017}.
Responding to disruptive events requires coordination and cooperation among all parties of the supply chain.%
\footnote{Various ways to manage disruption risks in supply chains are described for instance in \citet{ChopraSodhi2004,Sheffi2005,Tang2006,KleindorferSaad2009}.}
A particular critical issue is the assignment of liability.
For instance, in practice it is often observed that product liability is shared among supply chain partners \citep[][]{FanNiFang2020}, as in the 2007 case of Sanyo and Lenovo who shared the cost of recalling laptop batteries because of a faulty design by Sanyo.%
\footnote{For another example, one of the largest recalls in the automative industry had to do with defective airbags made by the Takata Corporation \citep{delRiego2021,Bijgaart2024}.
These had been supplied to automakers worldwide and the cost of the massive ensuing recalls eventually had to be covered also by the automakers as the supplier filed for bankruptcy.}
\citet{Chaoetal2009} argue that such product recall cost sharing can lead to improved product quality and increased supply chain profits.%
\footnote{See also \citet{Balachandran2005,Lim2001,ReyniersTapiero1995EJOR,ReyniersTapiero1995MS}.}
Our findings provide further managerial insights into how such practices of coordinated liability assignment can improve supply chain resilience and profitability. 

Although there may be contracts in place to manage liability, optimal design of such contracts is a complex task.
We suggest a framework to optimize how losses from disruptions are assigned.
Specifically, we examine how pre-announced liability agreements can reduce cascading harm by encouraging strategic investments in preparedness (e.g., choosing more reliable production methods).
Such investments lower the risk of disruption and improve the overall resilience of the chain.
In practice, our suggested solution can be implemented through blockchain-based smart contracts \citep{GudmundssonMS2023}.
In the specific case of supply chains, a blockchain-based approach may reduce informational asymmetries between manufacturers and consumers, which can lead to systemic welfare gains \citep{JayI,HongXiao2024}.
Moreover, through efficient tracing of defective products, blockchains may also help reduce supply chain risk by enabling selective product recall \citep{JayII}. 

In our stylized model, agents will in sequence and with random success take on the task of honoring the agreement they have with the next (e.g., to refine and deliver goods as input to the supply chain's next step). 
A failure at any point makes all subsequent agents fail as well.
Interpreted for instance in the context of a major construction project, `failure' might mean a delay at one step that delays the project as a whole with associated re-scheduling costs at all later steps.
Agents influence their success through unobservable investments in preparedness, which are done by all prior to knowing whether anyone will fail. 
In this way, agents later in the chain must account for that their investment will turn out to be pointless when there is a failure earlier on. 
Failures cause monetary losses.
In particular, if agent $i$ causes a disruption, it leads to loss $\ell_i$ to agent $i+1$, subsequently loss $\ell_{i+1}$ to agent $i+2$ (because of agent $i+1$'s forced failure), and so on.
How these losses $\ell_i + \dots + \ell_n$ are shared will affect how agents choose to invest to avoid the losses in the first place.
Our objective is to design a \emph{solution}---that is, a systematic way of sharing losses---to incentivize agents to invest efficiently.

A central question is whether ``downstream'' agents $j > i$ should take on some of the costs even when agent $i$ is mainly responsible for the disruption (we denote this $j$'s ``indirect liability'' to contrast the ``direct liability'' assigned~$i$).
From a normative perspective, if there is a significant risk that $j$ would have failed anyway (had $i$ succeeded), then it seems only fair that $j$ should carry some of the burden.
This is further corroborated from an efficiency perspective.
Specifically, our first main result pertains to solutions that incentivize investments that minimize total systemic costs.
These costs comprise all agents' expected losses and investments.
Theorem~\ref{TH:implementation} shows that
all such ``first-best'' solutions agree on agents' direct liabilities, provides a precise expression for these, and shows that they all feature positive indirect liabilities. 
Interestingly, the direct liabilities present a parallel to the seminal Clarke-Groves mechanisms \citep{Clarke1971,Groves1973}.

Still, Theorem~\ref{TH:implementation} mainly pins down first-best direct liabilities and leaves most indirect liabilities unresolved.
This allows us to largely decouple incentives from distributional concerns.
When designing first-best solutions, as long as direct liabilities are set optimally, we have flexibility in how to set indirect liabilities and in consequence fine-tune the distribution of agent (expected) payoffs. 
For this purpose, we identify a particular first-best solution $\phi^*$ with a simple structure on indirect liabilities.
Theorem~\ref{TH:crossEffect} shows that $\phi^*$ internalizes externalities to the point that even if agent $k$ would get to select $i$'s investment on behalf of~$i$, the efficient investment of agent $i$ would remain optimal to $k$ given efficient investment by the others.
Indeed, Theorem~\ref{TH:crossEffect} shows that this feature, which coordinates all agents around efficient investment, is unique to~$\phi^*$. 

A consequence of Theorem~\ref{TH:implementation} is that indirect liability is required for first-best investments.
Theorem~\ref{TH:POA} explores the other side of the coin, namely the intuitive ``disruptor-pays'' solution that instead systematically assigns all losses to the disruptor.
We find that the potential efficiency loss of this seemingly straightforward solution can be arbitrarily large.
In essence, if losses are skewed towards the end of the chain, then ``disruptor-pays'' would induce the possibly many earlier agents to invest quite heavily to avoid these latter losses;
at some point, total investments actually exceed total potential losses.
From a systemic standpoint, agents would have been better off not investing at all, yet in equilibrium, each has high investment.

\medskip \noindent \emph{Related literature.}
Closest to our work are the papers by \cite{BakshiKleindorfer2009} and \cite{GudmundssonMS2023}. \cite{BakshiKleindorfer2009} study a bargaining game with asymmetric information between two parties:
a retailer and an upstream supplier.
As in our case, the aim is to identify a cost-sharing contract that leads to efficient levels of investments for risk mitigation.
Probabilities of disruption are represented by a specific parameterized functional form (which for the supplier differs according to two potential types: ``safe'' and ``unsafe'').
In absence of agreement, the two parties play a non-cooperative Bayesian game given pre-determined loss shares in case of disruption.
By cooperating, they can induce efficient investments through a suitable cost-sharing arrangement where bargaining gains are shared equally.
In contrast, we have a multi-agent structure, no informational asymmetry, no explicit bargaining, and do not restrict probabilities to specific functional forms.

\cite{GudmundssonMS2023} consider a related model of sequentially triggered losses.
Their approach is mainly normative without a strategic, game-theoretic analysis and focus is on fair liability assignment post-disruption.
In contrast, the analysis of the current paper is set pre-disruption.
That is, we explore how contractual arrangements impact incentives for risk mitigation and proactive measures to prevent disruptions.
\citet{GudmundssonMS2023} assume agents only are aware of their direct relations (that is, agent $i$ knows only of agents $i-1$ and $i+1$).
For this reason, the disruptor is unaware of the full consequences of a disruption, and \citet{GudmundssonMS2023} argue therefore that later losses should be shared between the disruptor and later agents.
In contrast, potential losses are common knowledge in the current paper.
Hence, this argument of limited information no longer has any appeal.
Instead, we propose two novel ideas for why liability should not be put squarely on the first to fail.
The first is based in fairness: 
hypothetically, later agents might have failed anyway and therefore should share some of the later losses.
The second is based in efficiency and evident from Theorem~\ref{TH:POA}:
too much liability assigned the disruptor leads to inefficient overinvestment in preventing failure from happening.

As our model is richer than \citeauthor{GudmundssonMS2023}'s (\citeyear{GudmundssonMS2023}) and adds new levels of agent asymmetry, the solutions we identify are more refined than their ``fixed-fraction rules''.

Finally, by topic our study also relates to the Law and Economics literature on optimal liability assignment, which has been an active area of research at least since \cite{Coase1960}.%
\footnote{Other influential studies include, for instance, \citet{Brown1973,MarchandRussell1973,DiamondMirrlees1975,Green1976,EmonsSobel1991}.}
A typical aim in this literature is to analyze how different types of liability rules, in different economic environments, affect socially optimal resource allocation in terms of accident avoidance.
In particular, the prime concern of liability assignment is to what extent injurer or victim should be held responsible as well as whether and how negligence on both sides should be taken into account \citep[see e.g.,][]{Shavell1980,LandesEtal1987,Shavell2007}.
In comparison, we focus on the cascading effect of losses, giving (non-identical) agents dual roles as victim of the upstream agent and injurer of the downstream agent, along a multi-agent chain structure.
Moreover, we focus on 
strict liability: that is, it is only in the role as direct, or indirect, injurer that an agent is deemed liable.

\medskip \noindent \emph{Outline.}
Next, Section~\ref{SECT:model} introduces the model together with the concept of a solution and provides an axiomatic characterization of a natural class of solutions (Proposition~\ref{PR:characterization}).
Section~\ref{SECT:game} turns to strategic preparatory investments and includes our main findings.
The non-cooperative game is defined and shown to be supermodular (Proposition~\ref{PR:supermodularity}).
Moreover, the notion of first-best solutions is introduced and characterized (Theorem~\ref{TH:implementation}).
Next, our main proposal, the solution $\phi^*$, is defined and shown to possess several desirable features (Proposition~\ref{PR:intersect} and Theorem~\ref{TH:crossEffect}).
Finally, we show that the choice of solution truly matters:
we examine the ``disruptor-pays'' solution that, although very intuitive, risks arbitrarily large efficiency losses (Theorem~\ref{TH:POA}).
We conclude in Section~\ref{SECT:conclusion}.

\section{Preliminaries} \label{SECT:model}

We study a model of sequentially triggered losses inspired by the set-up in \citet{GudmundssonMS2023}.
In their work, the point of departure is after losses have been realized and the problem is to fairly allocate losses (liabilities) to agents in view of the cascading nature of the disruption.
Here, in contrast, we will examine the incentives such liability assignment creates for investing in what broadly can be called ``preparedness'' and hence in preventing losses from occurring in the first place.

\subsection{Model}

There is a set of \textbf{agents} $N = \{1, \dots, n\}$.
Agents are ordered such that each agent $i$ is in relation (e.g., has an agreement or contract) with the next agent, $i+1$, say as part of a supply chain.%
\footnote{Strictly speaking, agent $n$ has a contract with agent $n+1$, who ends the chain.
Consistent with the idea that agents only are liable for later losses, $n+1$ cannot be assigned any liability, and we therefore choose not to include $n+1$ in the model.
In a supply chain, $n+1$ may correspond to end-consumers who receive their ordered products late due to delays in the production chain.}
If agent $i$ fails her agreement with $i+1$, then this causes a loss $\ell_i > 0$ to agent~$i+1$.
Moreover, failures have a cascading effect:
if $i$ fails, then so do all subsequent agents $j > i$, and all losses $\ell_i + \dots + \ell_n$ are incurred.
Let $\ell = (\ell_1, \dots, \ell_n) \in \mathbb{R}^n_{>0}$ be the profile of potential \textbf{losses}.

Each agent $i$ can invest into some \textbf{technology} $p_i \colon \mathbb{R}_{\geq 0} \to [0,1]$ to increase chances to honor their agreements.
With investment $x_i \geq 0$, agent $i$'s probability of honoring their agreement is $p_i(x_i)$ (conditional on all agents prior to $i$ having honored theirs, as otherwise $i$ is certain to fail).
An investment profile, \textbf{profile} for short, is $x = (x_i)_{i \in N} \in \mathbb{R}_{\geq 0}^n \equiv X$.
We assume each $p_i$ is (strictly) increasing, strictly concave, differentiable, and such that $p_i(0) = 0$ and $p_i'(x_i) \to \infty$ as $x_i \to 0$.
Let $p = (p_i)_{i \in N}$ be the technology profile.
A \textbf{problem} is completely described by the pair $\langle \ell, p \rangle$ and fixed throughout.
Intuitively, the assumptions imposed on $\langle \ell, p \rangle$ ensure that we can conveniently limit the analysis to first-order conditions and positive investments.

As such, we can imagine a process where agents first choose their investment to prevent failure, and afterwards either honor, or fail to honor, their contracts according to the resulting probabilities $p_i(x_i)$.
With no investment they are sure to fail. The first agent to fail 
is dubbed the {\it disruptor}.
Thus, the disruptor, say agent $i$, causes a sequence of losses $\ell_i,\dots, \ell_n$.

\subsection{Solutions}

A solution is a rule describing how to allocate losses to agents.
As we do not know from the outset who will fail their agreements (if anyone at all), we systematically apportion losses in all possible events.
Hence, a solution represents a plan for all contingencies and may assign very different liability to agent $i$ if $i$, someone prior to $i$, or someone after $i$ is the first to fail.

Formally, a \textbf{solution} is a function $\phi \colon N \times N \to \mathbb{R}_{\geq 0}$.
In the event that agent $i$ is the disruptor (first to fail), the solution $\phi$ assigns liability $\phi(i,j) \geq 0$ to agent~$j$.
The underlying idea is that agents are only liable for later losses.
Hence, successful agents are not liable, i.e.,  $\phi(i,1) = \dots = \phi(i,i-1) = 0$ for $i > 1$, per definition.
Moreover, all incurred losses must be accounted for (solutions are ``balanced''):
$\sum_{j \geq i} \phi(i,j) = \sum_{j \geq i} \ell_j$.
For instance, a straightforward idea is to assign full liability to the disruptor who triggers the loss chain.
This defines the ``disruptor-pays'' solution $\widehat\phi$ with $\widehat\phi(i,i) = \sum_{j \geq i} \ell_j$ and $\widehat\phi(i,j) = 0$ for $i < j$.

However, for losses that are incurred indirectly through $i$'s failure, we contend that fair assignment is more nuanced.
We know that agents prior to $i$ were successful in honoring their agreements and we know that $i$ failed---but we do not know whether, hypothetically, some agent $j > i$ \emph{would have failed} had $i$ succeeded.
If we have reasons to believe a later loss $\ell_j$ would have arisen irrespective of $i$'s actions, then blame for $\ell_j$ should not be put squarely on the disruptor~$i$.
For this reason, we distinguish ``direct liability'' $\phi(i,i)$---what's assigned agent~$i$ who triggers the disruption---from ``indirect liability'' $\phi(i,j)$---what's assigned agent~$j > i$ on the hypothetical premise that they could have failed regardless.
We turn to two desirable properties of solutions.

\subsection{Desirable properties}

Agents are only liable for ``later'' losses.
Hence, even if losses $\ell_j + \dots + \ell_n$ are incurred, agent $k>j$ should only have to bear some fraction of $\ell_k + \dots + \ell_n$.
That fraction should be higher if $k$ actually failed than if $k$, hypothetically, might have failed.
That is to say, $k$'s indirect liability $\phi(j,k)$ should not exceed $k$'s direct liability $\phi(k,k)$. 

\begin{definition}[Higher direct liability]
    For each $j < k$, $\phi(j,k) \leq \phi(k,k)$.
\end{definition}

Consider now agents $i < j < k$.
The rationale for the indirect liability $\phi(i,k)$ is that agent~$k$ might have failed anyway and therefore should shoulder some of the burden.
Suppose now instead agent $i$ succeeds yet agent $j$ fails.
Then the assessment that ``$k$ might have failed anyway'' still stands.
This suggests that $k$ should be equally indirectly liable regardless who's the first to fail, $i$ or $j$.
A solution with this property is said to satisfy \textit{independent indirect liabilities}.

\begin{definition}[Independent indirect liabilities] \label{DEF:FIL}
    For each $i < j < k$, $\phi(i,k) = \phi(j,k)$.
\end{definition}

These two requirements are trivially satisfied by the ``disruptor-pays'' solution $\widehat\phi$ as well as by many more subtle solutions.
In particular, for a solution $\phi$ with \textit{higher direct liability} and \textit{independent indirect liabilities}, agent $j$'s indirect liability $\phi(i,j)$ must be some fraction $1 - \pi_j$ of their direct liability $\phi(j,j)$.
As direct liabilities turn out to be non-zero, we have $\pi_j \in [0,1]$.
In this way, any solution with the two properties is associated with a weight vector $\pi \in [0,1]^n$.
Let $\phi^\pi$ denote the solution parameterized by weights~$\pi$.
Through balance, we can work out an exact expression of the direct liabilities $\phi^\pi(j,j)$.
This is the larger part of the proof of Proposition~\ref{PR:characterization}.

\begin{proposition} \label{PR:characterization}
    A solution $\phi$ satisfies \textit{higher direct liability} and \textit{independent indirect liabilities} if and only if there is $\pi \in [0,1]^n$ such that, for each $i < j$, $\phi(i,j) = (1 - \pi_j) \cdot \phi(j,j)$ and 
    \[
        \phi(i,i) 
        = \ell_i + \sum_{k > i} \prod_{i < j \leq k} \pi_j \ell_k.
    \]
\end{proposition}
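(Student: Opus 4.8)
The plan is to prove both implications, devoting most of the effort to the ``only if'' direction since the converse is a matter of verification. For the ``if'' direction, I would start from the claimed closed form, read off \textit{independent indirect liabilities} directly (the value $(1-\pi_k)\phi(k,k)$ does not depend on the disruptor $i<k$) and \textit{higher direct liability} directly (since $\pi_k \ge 0$ forces $(1-\pi_k)\phi(k,k) \le \phi(k,k)$); the one nontrivial check is that the closed form actually defines a \emph{solution}, i.e.\ that it is balanced. I would verify balance by showing the closed form satisfies the recursion $\phi(i,i) = \ell_i + \pi_{i+1}\phi(i+1,i+1)$ and then confirming that the total liability $\sum_{j\ge i}\phi(i,j)$ telescopes to $\sum_{j\ge i}\ell_j$, anchored at $i=n$ where $\phi(n,n)=\ell_n$.

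For the ``only if'' direction, suppose $\phi$ has both properties. First I would use \textit{independent indirect liabilities} to note that for each $j$ the indirect liability $\phi(i,j)$ takes a common value over all disruptors $i<j$; call it $L_j$, and write $d_j := \phi(j,j)$. \textit{Higher direct liability} gives $0 \le L_j \le d_j$. Before introducing $\pi$, I would establish that every $d_j$ is strictly positive by downward induction: balance at $i=n$ yields $d_n = \ell_n > 0$, and subtracting the balance equation for disruptor $i+1$ from that for disruptor $i$ collapses the tail sums and leaves $d_i = \ell_i + (d_{i+1} - L_{i+1})$, so $d_{i+1}-L_{i+1}\ge 0$ and $\ell_i>0$ force $d_i>0$. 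With positivity in hand, I can set $\pi_j := 1 - L_j/d_j$, and $0 \le L_j \le d_j$ gives $\pi_j \in [0,1]$; this delivers $\phi(i,j) = (1-\pi_j)\,\phi(j,j)$ as claimed. The weight $\pi_1$ never enters any formula (agent $1$ bears no indirect liability), so it may be fixed arbitrarily in $[0,1]$.

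The heart of the argument is then solving for the direct liabilities in closed form. Substituting $L_{i+1}=(1-\pi_{i+1})d_{i+1}$ into the difference-of-balances identity turns it into the clean recursion
\[
    d_i = \ell_i + \pi_{i+1}\,d_{i+1},
\]
with terminal condition $d_n=\ell_n$. Unrolling this recursion (or a short downward induction) gives
\[
    d_i = \ell_i + \sum_{k>i} \Bigl(\prod_{i<j\le k}\pi_j\Bigr)\ell_k,
\]
which is exactly the stated expression; the inductive step amounts to checking $\pi_{i+1}\prod_{i+1<j\le k}\pi_j = \prod_{i<j\le k}\pi_j$ and absorbing the $k=i+1$ term into the sum.

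I expect the main obstacle---and the ``larger part'' the text alludes to---to be the passage from the full balance constraints to the recursion: the key maneuver is to difference consecutive balance equations so that the long tail sums $\sum_{j>i}L_j$ cancel, isolating a two-term relation between $d_i$ and $d_{i+1}$. Once that telescoping is spotted, both the positivity argument and the closed-form solution follow by routine downward induction from the anchor $d_n=\ell_n$.
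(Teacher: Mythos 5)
Your proposal is correct and follows essentially the same route as the paper: both reduce balance to the recursion $\phi(i,i) = \ell_i + \pi_{i+1}\,\phi(i+1,i+1)$ anchored at $\phi(n,n) = \ell_n$ and unroll it to the closed form. The only cosmetic difference is that you establish positivity of the direct liabilities by downward induction on the differenced balance equations, whereas the paper argues by contradiction from a single chain of (in)equalities; both hinge on the same combination of balance and \textit{higher direct liability}.
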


\begin{proof}
    It is immediate that $\phi$ satisfies \textit{higher direct liability} and \textit{independent indirect liabilities} if and only if there is $\pi \in \mathbb{R}^n$ such that $\phi(i,j) = (1 - \pi_j) \cdot \phi(j,j)$ for each $i < j$.
    We first verify that each $\pi_j \in [0,1]$.
    
    First, for contradiction, suppose $\phi(i,i) = 0$ for some agent~$i$.
    By balance, \textit{higher direct liability} and \textit{independent indirect liabilities}, and balance again,
    \[
        \sum_{j \geq i} \ell_j
        = \sum_{j \geq i} \phi(i,j) 
        = \sum_{j \geq i+1} \phi(i,j)
        \leq \sum_{j \geq i+1} \phi(i+1,j)
        = \sum_{j \geq i+1} \ell_j,
    \]
    which is a contradiction as $\ell_i > 0$.
    Hence, each $\phi(j,j) > 0$.
    As $\phi(i,j) = (1 - \pi_j) \cdot \phi(j,j) \geq 0$, $\pi_j \leq 1$.
    By \textit{higher direct liability}, $\phi(i,j) \leq \phi(j,j)$, so $\pi_j \geq 0$.

    Let now $\phi$ satisfy \textit{higher direct liability} and \textit{independent indirect liabilities}, so there is $\pi \in [0,1]^n$ with $\phi(i,j) = (1 - \pi_j) \cdot \phi(j,j)$ for each $i < j$.
    It remains to show that direct liabilities have the desired form.
    By balance, $\phi(n,n) = \ell_n$.
    Let $\pi_{n+1} \equiv 0$ and assume that, for each $j > i$, 
    \[
        \phi(j,j) = \ell_j + \pi_{j+1} \cdot \phi(j+1,j+1).
    \]
    We will show that this holds for $i$ as well.
    By balance and \textit{independent indirect liabilities},
    \begin{align*}
        \phi(i,i) 
        &= \sum_{j \geq i} \ell_j - \sum_{j > i} \phi(i,j)
        = \ell_i + \sum_{j > i} \left ( \ell_j -  (1 - \pi_j) \cdot \phi(j,j) \right ) \\
        &= \ell_i + \sum_{j > i} \left ( \ell_j + \pi_j \cdot \phi(j,j) - (\ell_j + \pi_{j+1} \cdot \phi(j+1,j+1)) \right) \\
        &= \ell_i + \sum_{j > i} \left ( \pi_j \cdot \phi(j,j) - \pi_{j+1} \cdot \phi(j+1,j+1) \right ).
    \end{align*}
    This simplifies to the desired $\phi(i,i) = \ell_i + \pi_{i+1} \cdot \phi(i+1,i+1)$.
    Expanding the recursive expression yields the desired direct liabilities:
    \begin{align*}
        \phi(i,i) 
        &= \ell_i + \pi_{i+1} \cdot \phi(i+1,i+1) 
        = \ell_i + \pi_{i+1} \ell_{i+1} + \pi_{i+1} \cdot \phi(i+2,i+2)
        = \dots \\
        &= \ell_i + \pi_{i+1} \ell_{i+1} + \pi_{i+1} \pi_{i+2} \ell_2 + \dots + \pi_{i+1} \cdots \pi_n \ell_n. \qedhere
    \end{align*}
\end{proof}

\begin{remark}[Three-agent illustration] \label{REM3}
    With three agents, there are three potential points of disruption and Table~\ref{TAB:REM3} summarizes the respective liabilities.
    If agent $3$ is the first (and only) to fail, then $3$ is fully liable for the resulting loss:
    $\phi(3,3) = \ell_3$.
    This is agent $3$'s direct liability.
    If instead agent $2$ is the first to fail, then agent~$3$ is assigned less, namely $\phi(2,3) = (1 - \pi_3) \cdot \phi(3,3) = (1 - \pi_3) \ell_3$.
    This is agent $3$'s indirect liability.
    By balance, the remaining $\ell_2 + \ell_3 - \phi(3,3) = \ell_2 + \pi_3 \ell_3 = \ell(2,2)$ is agent $2$'s direct liability.
    
    \begin{table}[!htb]
        \centering
        \begin{tabular}{cccc} \toprule
            Disruptor $i$ & Agent $j = 1$ & $j = 2$ & $j = 3$ \\ \midrule
            $i = 1$ & $\ell_1 + \pi_2 (\ell_2 + \pi_3 \ell_3)$ & $(1 - \pi_2) ( \ell_2 + \pi_3 \ell_3 )$ & $(1 - \pi_3) \ell_3$ \\
            $i = 2$ & $0$ & $\ell_2 + \pi_3 \ell_3$ & $(1 - \pi_3) \ell_3$ \\
            $i = 3$ & $0$ & $0$ & $\ell_3$ \\
            \bottomrule
        \end{tabular}
        \caption{Liabilities $\phi(i,j)$ for the case of three agents.}
        \label{TAB:REM3}
    \end{table}
    
    In a production context, a pragmatic alternative is to set weights $\pi$ to the fault rates inherent to agents' production.
    For instance, if there is a 10\% risk that agent $3$ fails (if the others succeed), then we could set $\pi_3 = 9/10$ and have $\phi(2,3) = \ell_3 / 10$ and $\phi(2,2) = \ell_2 + (9/10) \ell_3$.
    Finally, if the disruption happens already with agent $1$, then agent $3$'s indirect liability $\phi(1,3) = \phi(2,3)$ is unchanged, agent $2$ pays $\phi(1,2) = (1 - \pi_2) \cdot \phi(2,2)$, and the remaining losses are covered by agent~$1$.
    \hfill \textit{End of remark}
\end{remark}

Next, we turn to analyze the incentives that solutions create for agents to honor their agreements. 
This defines a game in which agents make strategic investments in preventing failure. 

\section{Strategic preparatory investments} \label{SECT:game}

Agents simultaneously invest to affect their chances of honoring their agreements through their individual technology $p_i$: say, by investing in more or less reliable machinery.
Investments $x=(x_1,\dots,x_n)$ are unobservable and done by all prior to knowing whether any agreement will fail.
In this way, agents later in the chain must account for that their investment in preparation will turn out to be pointless if there is a failure earlier on.
Our interest is solely in this investment decision.
For instance, we do not model how agents may exert effort to honor their agreement beyond their preparation.
How losses are shared affects how agents choose to invest.
Our main objective is to devise a solution $\phi$ that incentivizes efficient investments that minimize total system costs (to be defined in Subsection~\ref{SUB:firstbest}).

\subsection{The game}

A solution $\phi$ partly determines agent payoffs and induces a particular game in which agents choose investment levels strategically.
Agents are risk neutral and minimize expected costs.
The expected cost to agent $1$ at profile $x$ under solution $\phi$ is
\[
    C_1(x; \phi) = (1 - p_1(x_1)) \cdot \phi(1,1) + x_1.
\]
For agent $k > 1$, the expected cost is 
\[
    C_k(x; \phi)
    = \sum_{j < k} \prod_{i < j} p_i(x_i) (1 - p_j(x_j)) \cdot \phi(j,k) + \prod_{i < k} p_i(x_i) (1 - p_k(x_k)) \cdot \phi(k,k) + x_k.
\]
The first term is $k$'s potential indirect liability (which $k$ cannot affect, and which agent $1$ does not have a counterpart to), the second is the direct liability, and the third is investment paid upfront.
An immediate observation for agents $i < k$ is that earlier costs $C_i$ are unaffected by later investments~$x_k$. 
In this way, each solution $\phi$ induces a simultaneous-move game between agents $N$, each with strategy space $\mathbb{R}_{\geq 0}$ and payoff function $C_i$ that depends on $\langle \ell, p \rangle$ as well as~$\phi$.
Investments are not observable, but otherwise all information is common knowledge.

\subsection{Supermodularity}

Proposition~\ref{PR:supermodularity} is an observation that holds universally for all solutions~$\phi$ and induced games.
The more earlier agents invest, the likelier that they will honor their agreements and that each later agent may become the disruptor.
In response, later agents will also invest more. 
In this ordered sense, investments are strategic complements and the induced games have several appealing features \citep{Topkis1979,Vives1990,MilgromRoberts1990,HougaardMS2022}. 

\begin{proposition} \label{PR:supermodularity}
    Each solution $\phi$ induces a supermodular, dominance-solvable game with a unique Nash equilibrium.
\end{proposition}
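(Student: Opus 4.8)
The plan is to exploit the \emph{triangular structure} of the induced game together with the standard theory of supermodular games. The crucial structural observation is that agent $k$'s cost $C_k$ depends only on the investments $x_1, \dots, x_k$ of agent $k$ and those preceding her: an investment $x_m$ with $m > k$ appears nowhere in $C_k$, since it affects neither whether agent $k$ is reached nor what she is charged. I would first establish supermodularity, and then leverage this triangularity to obtain dominance-solvability and uniqueness directly, rather than invoking general uniqueness criteria for supermodular games.

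For supermodularity, since each strategy space $\mathbb{R}_{\geq 0}$ is a chain (hence a lattice) and strategies are one-dimensional, it suffices to verify that the payoffs $-C_k$ have increasing differences, i.e., that the mixed partials of cost are nonpositive. Only the direct-liability term survives in $\partial C_k/\partial x_k = 1 - \big(\prod_{i<k} p_i(x_i)\big) p_k'(x_k)\,\phi(k,k)$, and for $m < k$ I would compute
\[
    \frac{\partial^2 C_k}{\partial x_k\,\partial x_m}
    = - \Big(\prod_{i<k,\,i\neq m} p_i(x_i)\Big) p_m'(x_m)\, p_k'(x_k)\, \phi(k,k) \;\leq\; 0,
\]
using $p_i \geq 0$, $p_m', p_k' > 0$, and $\phi(k,k) \geq 0$; for $m > k$ the mixed partial vanishes by triangularity. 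Hence investments are strategic complements and the game is supermodular.

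For the remaining claims I would induct on the agent index, turning triangularity into iterated strict dominance. Agent $1$'s cost $C_1(x_1) = (1 - p_1(x_1))\phi(1,1) + x_1$ involves no other investment and is strictly convex (as $p_1$ is strictly concave); since $p_1'(x_1)\to\infty$ as $x_1\to 0$, its first-order condition $p_1'(x_1)=1/\phi(1,1)$ pins down a unique interior minimizer $x_1^*$ when $\phi(1,1)>0$ (and $x_1^*=0$ otherwise), which strictly dominates every other strategy. Assuming inductively that iterated strict dominance has collapsed agents $1,\dots,k-1$ to the singletons $x_1^*,\dots,x_{k-1}^*$, I would fix these and observe that $C_k$ reduces to a strictly convex function of $x_k$ alone, with unique minimizer $x_k^*$ solving $p_k'(x_k)=1/\big(\phi(k,k)\prod_{i<k}p_i(x_i^*)\big)$ when that coefficient is positive (and $x_k^*=0$ otherwise). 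Iterating to $k=n$ leaves the single profile $(x_1^*,\dots,x_n^*)$, establishing dominance-solvability; as the unique survivor of iterated strict dominance it is \emph{a fortiori} the unique Nash equilibrium.

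The computations are routine; the one point demanding care is the treatment of the boundary and degenerate cases---whether each minimizer is interior or the corner $x_k^*=0$---which hinges on whether the coefficient $\phi(k,k)\prod_{i<k}p_i(x_i^*)$ is strictly positive. The Inada-type condition $p_k'(0^+)=\infty$ delivers interiority exactly when that coefficient is positive, while $p_i(0)=0$ propagates $x_i^*=0$ downstream whenever some earlier direct liability vanishes. I expect the main conceptual step, rather than any computation, to be the recognition that triangularity upgrades the generic supermodular conclusion (a complete lattice of equilibria merely bracketed by iterated dominance) to a genuinely unique, dominance-solvable outcome, since at each stage the pertinent best response is single-valued and independent of the later play already eliminated.
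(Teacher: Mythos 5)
Your proof is correct and takes essentially the same route as the paper's: verify that the mixed partials $\partial^2 C_k/\partial x_m\,\partial x_k$ are nonpositive for supermodularity, then exploit the triangular dependence of $C_k$ on $x_1,\dots,x_k$ to solve the game agent by agent via iterated strict dominance, yielding the unique Nash equilibrium. Your extra care with the corner case where $\phi(k,k)\prod_{i<k}p_i(x_i^*)=0$ is a minor refinement the paper passes over (it asserts strict convexity of $C_1$ without qualification), but it does not alter the argument.
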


\begin{proof}
    Fix a solution $\phi$ and a profile $x \in X$.
    We will show that, for each agent $i$ and $k$, 
    \[
        \frac{\partial^2 C_k(x; \phi)}{\partial x_i \partial x_k} \leq 0.
    \]
    (That is, $C_k$ is submodular;
    the game is supermodular as $k$ wants to minimize $C_k$.)
    This is immediate for $k < i$ as earlier agents are unaffected by later investments, so let $i < k$.
    For each agent $j$ and investment $x_j$, $p_j(x_j) \geq 0$, $p'_j(x_j) \geq 0$, and $\phi(j,j) \geq 0$.
    Hence,
    \begin{align*}
        \frac{\partial C_k(x; \phi)}{\partial x_k}
        &= 1 - \frac{p_k'(x_k)}{p_k(x_k)} \prod_{j \leq k} p_j(x_j) \cdot \phi(k,k) \\
        \implies
        \frac{\partial^2 C_k(x; \phi)}{\partial x_i \partial x_k}
        &= - \frac{p'_i(x_i)}{p_i(x_i)} \frac{p_k'(x_k)}{p_k(x_k)} \prod_{j \leq k} p_j(x_j) \cdot \phi(k,k)
        \leq 0. 
    \end{align*}

    As $C_1(x; \phi) = (1 - p_1(x_1)) \cdot \phi(1,1) + x_1$ only depends on $x_1$, agent $1$'s optimal investment is independent of the investments of the others, $x_{>1}$.
    Moreover, as $C_1$ is strictly convex in $x_1$, there is a unique optimal investment $\tilde{x}_1$. 

    Next, $C_2(x; \phi)$ similarly is independent of $x_{>2}$.
    Moreover, restricting to agent $1$'s dominant strategy $\tilde{x}_1$, $C_2((\tilde{x}_1,x_{>1}); \phi)$ only varies in $x_2$.
    Again, we find that there is a unique optimal investment $\tilde{x}_2$.
    Repeating the argument, we pin down the profile $\tilde{x}$ through iterated elimination of strictly dominated strategies.
    Moreover, $\tilde{x}$ is also the unique Nash equilibrium.
\end{proof}

In this way, we can speak unambiguously about how solution $\phi$ \emph{implements} profile $x$, as $x$ will be the unique equilibrium profile in the game induced by $\phi$.
Next, we turn to \emph{first-best implementation} and seek solutions $\phi$ to implement efficient investments.

\subsection{First-best implementation} \label{SUB:firstbest}

The \textbf{total cost} $\mathbb{C} \colon X \to \mathbb{R}_{\geq 0}$ comprises all agents' expected losses and investments.
As solutions are balanced, $\mathbb{C}(x) = \sum_i C_i(x; \phi)$ for each solution~$\phi$.
(That is, $\phi$ is ``internal accounting''---it determines who pays what, but that is irrelevant when computing the system's total cost.)
Put differently, for each agent $j$, the loss $\ell_j$ is incurred as long as some agent $i \leq j$ fails.
This happens with probability $1 - p_1(x_1) \cdots p_j(x_j)$. 
Thus, the total cost is 
\[
    \mathbb{C}(x) = \sum_j \big (1 - \prod_{i \leq j} p_i(x_i) \big ) \ell_j + \sum_j x_j.
\]

A profile $x^* \in X$ is \textbf{efficient} if $\mathbb{C}(x^*) \leq \mathbb{C}(x)$ for all $x \in X$.
Next, Proposition~\ref{PR:efficiency} shows that there always is a unique efficient profile and that it entails positive investments. 
For each agent~$i$, the first-order condition is given by 
\[
    \frac{\partial \mathbb{C}(x)}{\partial x_i} = 0
    \iff 
    \frac{p_i'(x_i)}{p_i(x_i)} \sum_{k \geq i} \prod_{j \leq k} p_j(x_j) \ell_k = 1.
\]
Hence, $x^*$ is the solution to the system of $n$ such equations.
For $i < j < k$, agent $j$'s first-order condition depends both on all earlier and all later investments $x_i$ and~$x_k$.

\begin{proposition} \label{PR:efficiency}
    There is a unique efficient profile $x^*$ and $x^*_i > 0$ for each agent~$i$.
\end{proposition}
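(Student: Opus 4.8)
The plan is to establish the three assertions in turn — existence of a minimizer, positivity (interiority) of any minimizer, and uniqueness — with uniqueness being by far the hardest.

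For existence I would argue by coercivity. Since each $p_i$ maps into $[0,1]$, every product $\prod_{i\le j}p_i(x_i)$ lies in $[0,1]$, so the loss term $\sum_j\big(1-\prod_{i\le j}p_i(x_i)\big)\ell_j$ stays in $[0,\sum_j\ell_j]$ on all of $X$, while $\sum_j x_j\ge 0$ tends to $+\infty$ as $\|x\|\to\infty$. Hence $\mathbb{C}$ is continuous and coercive on the closed set $X=\mathbb{R}^n_{\ge 0}$, its sublevel sets are compact, and a global minimizer $x^*$ exists.

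For positivity I would rule out zero coordinates using the Inada-type condition $p_i'(x_i)\to\infty$ as $x_i\to 0$, inductively on $i$. Writing $\partial_i\mathbb{C}(x)=1-p_i'(x_i)\sum_{k\ge i}\prod_{j\le k,\,j\ne i}p_j(x_j)\,\ell_k$, the $k=i$ summand contributes $-p_i'(x_i)\prod_{j<i}p_j(x_j)\,\ell_i$. For $i=1$ the prefix product is empty and equals $1$, so $\partial_1\mathbb{C}\to-\infty$ as $x_1\downarrow 0$, which is incompatible with minimality; thus $x_1^*>0$. Given $x_1^*,\dots,x_{i-1}^*>0$ for the same minimizer, the prefix $\prod_{j<i}p_j(x_j^*)$ is strictly positive, so again $\partial_i\mathbb{C}\to-\infty$ as $x_i\downarrow 0$ and $x_i^*>0$. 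Consequently every minimizer is interior and solves the first-order system displayed just before the proposition.

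Uniqueness is the main obstacle, and the natural first idea — joint convexity of $\mathbb{C}$ — fails: because $\mathbb{C}$ is built from products $\prod_{i\le j}p_i(x_i)$ of concave functions, it is genuinely non-convex, and stationary points that are not minima can occur. I would therefore exploit the chain structure via a backward dynamic-programming reformulation. Set $\Phi_{n+1}\equiv 0$ and
\[
\Phi_i(q)=\min_{x_i\ge 0}\big[(1-q\,p_i(x_i))\ell_i+x_i+\Phi_{i+1}(q\,p_i(x_i))\big],
\]
where $q$ is the probability that all predecessors of $i$ succeed; unrolling shows $\Phi_1(1)=\min_x\mathbb{C}(x)$, with the efficient profile recovered by threading the stage optimizers forward from $q=1$. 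Uniqueness then reduces to showing each stage problem has a unique minimizer. At an interior stage optimum the first-order condition reads $q\,p_i'(x_i)\big(\ell_i-\Phi_{i+1}'(q\,p_i(x_i))\big)=1$, and substituting it, the own second derivative equals
\[
-\,\frac{p_i''(x_i)}{p_i'(x_i)}+q^2\,p_i'(x_i)^2\,\Phi_{i+1}''(q\,p_i(x_i)).
\]
The first term is strictly positive by strict concavity of $p_i$; the difficulty is the second, since the value function is decreasing ($\Phi_{i+1}'\le 0<\ell_i$) but concave — already $\Phi_n$, being a minimum of functions affine in $q$, is concave — so $\Phi_{i+1}''\le 0$ works against convexity. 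The crux of the whole argument is thus to control this curvature, that is, to rule out the spurious interior stationary points that non-convexity permits, by combining strict concavity of the $p_i$, the Inada condition, and the chain monotonicity of $\Phi_{i+1}$. A complementary handle, which I would use to organize the argument, is that the cross-partials satisfy $\partial_i\partial_k\mathbb{C}<0$, so $\mathbb{C}$ is submodular and its set of minimizers is a sublattice; the remaining task is to show the product structure collapses this sublattice to a single point. I expect essentially all the work to concentrate in this curvature/uniqueness step.
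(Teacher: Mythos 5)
Your existence and positivity steps are correct and essentially coincide with the paper's. The paper also reduces existence to the extreme value theorem on the compact set $\{x \in X : \sum x_i \le \sum \ell_i\}$ (your coercivity argument is the same observation), and its positivity step likewise isolates the first agent $j$ with zero investment and uses $p_j'(x_j) \to \infty$ together with positivity of the prefix product $\prod_{i<j}p_i(y_i)$ to show that a small positive investment strictly lowers $\mathbb{C}$; the paper phrases this via the difference $\mathbb{C}(y) - \mathbb{C}(x_j, y_{-j})$ rather than via $\partial_j \mathbb{C} \to -\infty$, but the content, including the induction on earlier coordinates, is identical.

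The gap is uniqueness, which you have correctly located but not closed. Your dynamic-programming reformulation is a reasonable organizing device, but, as you yourself note, the continuation value $\Phi_{i+1}$ is concave in the success probability $q$ (a minimum of functions affine in $q$), so the stage problem's second-order condition is not automatic, and you leave the decisive curvature estimate as an expectation rather than an argument; likewise the submodularity observation only yields that the minimizer set is a sublattice, not a singleton. A proof that announces "essentially all the work concentrates in this step" and then stops has not established the claim. For comparison, the paper disposes of uniqueness in one sentence, asserting that it "follows by standard arguments as each $p_i$ is strictly concave" with a citation to a convex-optimization text. Your diagnosis shows why this is too quick: $\mathbb{C}$ contains terms $-\ell_j \prod_{i \le j} p_i(x_i)$, and a product of positive concave functions need not be concave (it is only log-concave), so $\mathbb{C}$ is not jointly convex and strict concavity of the $p_i$ does not by itself deliver uniqueness through any textbook convexity argument. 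So two of the three assertions are fully proved and match the paper; the third is proved neither by you nor, in any checkable detail, by the paper, and an honest write-up should either supply the missing curvature/monotonicity argument or flag the assumption under which uniqueness is being asserted.
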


\begin{proof}
    We proceed in three steps.
    Existence of an efficient profile follows as $\mathbb{C}$ is continuous and we can restrict attention to a compact subset of $X$;
    positive investments hinge on positive losses and $p_i$ steep at zero; 
    uniqueness finally is a consequence of strict concavity of $p_i$. 

    \textsc{Existence:}
    For profiles $x \in X$ with $\sum x_i \geq \sum \ell_i$, we have $\mathbb{C}(x) \geq \sum x_i \geq \sum \ell_i = \mathbb{C}(0,\dots,0)$.
    Hence, for the purpose of minimizing $\mathbb{C}$, it suffices to restrict to profiles in the compact set $\{ x \in X \mid \sum x_i \leq \sum \ell_i \}$.
    As each $p_i$ is differentiable, $\mathbb{C}$ is continuous and we can appeal to the extreme value theorem.
    Hence, there exist efficient profiles.

    \textsc{Positivity:}
    By contradiction, suppose there is an efficient profile $y$ in which some agent invests zero.
    Let $j$ be the first such agent:
    $y_j = 0$ whereas $y_i > 0$ for each $i < j$.
    As $j$ is sure to fail, no later agent invests either:
    we have $y_k = 0$ for $k > j$.
    For each $x_j \geq 0$, 
    \[
        \mathbb{C}(y) - \mathbb{C}(x_j, y_{-j})
        = \prod_{i < j} p_i(y_i) p_j(x_j) \ell_j - x_j.
    \]
    This is zero at $x_j = 0$.
    It suffices to show that the derivative is positive at $x_j = 0$, as then, for sufficiently small but positive $x_j$, $\mathbb{C}(y) > \mathbb{C}(x_j, y_{-j})$, contradicting that $y$ is efficient.
    Differentiate and use that $p_i(y_i) > 0$ for $i < j$, $\ell_j > 0$, and $p_j'(x_j) \to \infty$ as $x_j \to 0$:
    \[
        \frac{\partial }{\partial x_j} \Big ( \prod_{i < j} p_i(y_i) p_j(x_j) \ell_j - x_j \Big )
        = \prod_{i < j} p_i(y_i) p_j'(x_j) \ell_j - 1 \to \infty \text{ as } x_j \to 0.
    \]
    Hence, efficient investments are positive.

    \textsc{Uniqueness:}
    This follows by standard arguments as each $p_i$ is strictly concave \citep[e.g.][]{BoydVandenberghe2004}.
\end{proof}

A \emph{first-best solution} $\phi$ is one that implements efficient investments~$x^*$.
First-best solutions always exist (we give one example in Definition~\ref{DEF:phistar} below), and Theorem~\ref{TH:implementation} shows that all first-best solutions agree on agents' direct liabilities. 
Furthermore, by balance we can conclude that indirect liabilities must be positive for first-best solutions.

\begin{theorem} \label{TH:implementation}
    A solution $\phi$ implements efficient investments $x^*$ if and only if, for each agent~$i$,
    \[
        \phi(i,i) = \ell_i + \sum_{k > i} \prod_{i < j \leq k} p_j(x^*_j) \ell_k.
    \]
\end{theorem}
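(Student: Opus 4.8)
The plan is to reduce the statement to a comparison of two first-order conditions: the one characterizing the equilibrium (``implemented'') profile of the game induced by $\phi$, and the one characterizing the efficient profile $x^*$. Since each is an ``if and only if'' condition, chaining them yields the characterization in both directions at once.

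First I would recall, from the proof of Proposition~\ref{PR:supermodularity}, that the only part of $C_i(x;\phi)$ depending on agent $i$'s own investment $x_i$ is the direct-liability term together with $x_i$ itself; the indirect-liability terms involve only earlier investments. Differentiating gives $\partial C_i / \partial x_i = 1 - p_i'(x_i)\prod_{j<i}p_j(x_j)\,\phi(i,i)$. Because $C_i$ is strictly convex in $x_i$ (as $p_i$ is strictly concave and $\phi(i,i)\ge 0$) and $p_i'\to\infty$ at $0$, the best response is interior and pinned down by this first-order condition. Hence the profile implemented by $\phi$ (the unique equilibrium $\tilde x$ from Proposition~\ref{PR:supermodularity}) is characterized by $p_i'(\tilde x_i)\prod_{j<i}p_j(\tilde x_j)\,\phi(i,i)=1$ for every agent $i$; crucially, only $\phi(i,i)$ enters, which is why the result pins down direct liabilities alone.

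Then I would observe that $\phi$ implements $x^*$ exactly when $\tilde x = x^*$, that is, when $x^*$ satisfies these $n$ equilibrium conditions $p_i'(x_i^*)\prod_{j<i}p_j(x_j^*)\,\phi(i,i)=1$. Alongside each I would place the efficiency condition for $x^*$ from Proposition~\ref{PR:efficiency}, namely $\frac{p_i'(x_i^*)}{p_i(x_i^*)}\sum_{k\ge i}\prod_{j\le k}p_j(x_j^*)\,\ell_k=1$. The key move is that both right-hand sides equal $1$ and both left-hand sides share the common strictly positive factor $p_i'(x_i^*)$; cancelling it leaves $\prod_{j<i}p_j(x_j^*)\,\phi(i,i)=\frac{1}{p_i(x_i^*)}\sum_{k\ge i}\prod_{j\le k}p_j(x_j^*)\,\ell_k$, hence $\phi(i,i)=\big(\prod_{j\le i}p_j(x_j^*)\big)^{-1}\sum_{k\ge i}\prod_{j\le k}p_j(x_j^*)\,\ell_k$.

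Finally I would simplify the product ratio: for each $k\ge i$ the factor $\prod_{j\le i}p_j(x_j^*)$ cancels the leading part of $\prod_{j\le k}p_j(x_j^*)$, leaving $\prod_{i<j\le k}p_j(x_j^*)$ (and $1$ when $k=i$), which gives precisely $\ell_i+\sum_{k>i}\prod_{i<j\le k}p_j(x_j^*)\,\ell_k$. Since every step is an equivalence, both directions follow simultaneously. I expect the only genuine care-points to be bookkeeping—justifying interiority and uniqueness of the implemented profile (already supplied by Propositions~\ref{PR:supermodularity} and~\ref{PR:efficiency}) and tracking the product indices through the cancellation; the conceptual crux, the clean cancellation of $p_i'(x_i^*)$ between the incentive and efficiency first-order conditions, is short.
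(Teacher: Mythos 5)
Your proposal is correct and follows essentially the same route as the paper: both reduce the claim to matching agent $i$'s incentive first-order condition $\partial C_i(x^*;\phi)/\partial x_i=0$ against the efficiency condition $\partial \mathbb{C}(x^*)/\partial x_i=0$, cancel the common factor involving $p_i'(x^*_i)$ and $\prod_{j\le i}p_j(x^*_j)$, and read off $\phi(i,i)$. Your added bookkeeping on interiority, strict convexity, and uniqueness of the implemented profile (via Proposition~\ref{PR:supermodularity}) only makes explicit what the paper leaves implicit.
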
 

\begin{proof}
    As $\mathbb{C}$ is differentiable and minimized at $x^*$, $\partial \mathbb{C}(x^*) / \partial x_i = 0$ for each $i$.
    Similarly, as $C_i$ is differentiable, we must have $\partial C_i(x^*; \phi) / \partial x_i = 0$ to incentivize investment $x^*_i$.
    We have
    \begin{align*}
        \frac{\partial \mathbb{C}(x^*)}{\partial x_i}
        &= 1 - \frac{p_i'(x^*_i)}{p_i(x^*_i)} \sum_{k \geq i} \prod_{j \leq k} p_j(x^*_j) \ell_k \\
        \frac{\partial C_i(x^*; \phi)}{\partial x_i}
        &= 1 - \frac{p_i'(x^*_i)}{p_i(x^*_i)} \prod_{h \leq i} p_h(x^*_h) \cdot \phi(i,i).
    \end{align*}
    These coincide when
    \[
        \sum_{k \geq i} \prod_{j \leq k} p_j(x^*_j) \ell_k =
        \prod_{h \leq i} p_h(x^*_h) \cdot \phi(i,i),
    \]
    which yields
    \[
        \phi(i,i) 
        = \sum_{k \geq i} \prod_{i < j \leq k} p_j(x^*_j) \ell_k
        = \ell_i + \sum_{k > i} \prod_{i < j \leq k} p_j(x^*_j) \ell_k. \qedhere
    \]
\end{proof}

The condition defining first-best direct liabilities can be interpreted as follows.
Suppose all agents prior to $i$ honor their agreements and that all others invest optimally, $x^*_{-i}$. 
Now, look at the difference in total cost if $i$ fails \tikz{\node [draw,circle,inner sep=0,minimum width=8] {\tiny 1};} versus the expected total cost if $i$ succeeds \tikz{\node [draw,circle,inner sep=0,minimum width=8] {\tiny 2};} given that agents $j > i$ invest $x^*_j$ (investments $\sum x_k$ are omitted as these are incurred in both cases):
\[
    \underbrace{\sum_{k \geq i} \ell_k}_{\tikz{\node at (0,0) {}; \node [draw,circle,inner sep=0,minimum width=8,xshift=-5,yshift=-3] {\tiny 1};}}
    - \underbrace{\sum_{k > i} (1 - \prod_{i < j \leq k} p_j(x^*_j) ) \ell_k}_{\tikz{\node at (0,0) {}; \node [draw,circle,inner sep=0,minimum width=8,xshift=-5,yshift=-3] {\tiny 2};}}
    = \ell_i + \sum_{k > i} \prod_{i < j \leq k} p_j(x^*_j) \ell_k.
\]
Hence, $i$'s direct liability is the net harm that $i$'s failure imposes on society conditional on optimal investments by the others.
In this regard, direct liabilities resemble a Clarke-Groves mechanism \citep{Clarke1971,Groves1973}.

\subsection{Internalizing externalities}

Theorem~\ref{TH:implementation} only settles direct liabilities of first-best solutions and leaves open indirect liabilities.
Indeed, a consequence of Theorem~\ref{TH:implementation} is that indirect liabilities are necessary for first-best implementation.
Restricting to two-agent problems, first-best implementation and balance immediately pin down the indirect liabilities as well:
we have $\phi(1,2) = \ell_1 + \ell_2 - \phi(1,1) = (1 - p_1(x^*_1)) \cdot \phi(1,1)$.
Along the same lines, moving to three agents, we can determine $\phi(2,3) = (1 - p_2(x^*_2)) \cdot \phi(2,2)$.
However, we cannot pin down $\phi(1,2)$ and $\phi(1,3)$ as balance only gives a condition on their sum:
$\phi(1,2) + \phi(1,3) = \ell_1 + \ell_2 + \ell_3 - \phi(1,1)$.
Hence, there may be many first-best solutions.
Among these, we define the solution $\phi^*$ with first-best direct liabilities $\phi^*(i,i)$ and independent indirect liabilities $\phi^*(i,j)$. 
\begin{definition}[Solution $\phi^*$] \label{DEF:phistar}
    For each agent $i$ and $j$, $\phi^*(i,j) = (1 - p_i(x^*_i)) \cdot \phi^*(i,i)$ and
    \[
        \phi^*(i,i) = \ell_i + \sum_{k > i} \prod_{i < j \leq k} p_j(x^*_j) \ell_k.
    \]
\end{definition}

\begin{remark}[Two-agent illustration of $\phi^*$]
    Consider the two-agent instance in which agent~$1$ fails to deliver an input to agent~$2$, who consequently cannot deliver the end product to the consumers.
    At a first glance, one could argue that, absent any evidence of wrong-doing by agent~$2$, agent~$1$ should carry all losses.
    But if there is high probability that agent $2$ would have failed regardless of $1$'s actions, it seems appropriate that the loss $\ell_2$ caused to the consumers should be shared between $1$ and~$2$.
    Indeed, a reasonable solution would be to assign agent $2$ the expected loss they create.
    But as investments are unobservable, we cannot compute this expected loss.
    However, losses and technologies are common knowledge and, therefore, efficient investments $x^*$ can be determined.
    The defining feature of $\phi^*$ then is to assign agent $2$ liability that is the expected loss they create \emph{conditional on efficient investments}.
    Remarkably, assigning liabilities in a way that would be fair under efficient investment actually also incentivizes efficient investment.
    \hfill \textit{End of remark}
\end{remark}

Proposition~\ref{PR:intersect} shows that $\phi^*$ is the unique first-best solution that satisfies \textit{independent indirect liabilities}.
In this way, it is the unique intersection of the first-best solutions of Theorem~\ref{TH:implementation} and the class characterized in Proposition~\ref{PR:characterization}. 

\begin{proposition} \label{PR:intersect}
    A solution $\phi$ satisfies \textit{independent indirect liabilities} and implements efficient investments if and only if $\phi = \phi^*$.
\end{proposition}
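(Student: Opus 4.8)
The plan is to prove the two implications separately. The conceptual crux is the ``if'' direction, and within it the single genuinely technical point: one must check from scratch that $\phi^*$ is a \emph{balanced} solution. Note that Proposition~\ref{PR:characterization} cannot be invoked to supply this, since that result characterizes functions already assumed to be solutions (hence balanced) and so says nothing about balance of an arbitrary candidate. Once balance of $\phi^*$ is established, both directions reduce to short invocations of Theorem~\ref{TH:implementation} together with elementary telescoping.

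For the ``if'' direction I would first record the recursion $\phi^*(i,i) = \ell_i + p_{i+1}(x^*_{i+1}) \cdot \phi^*(i+1,i+1)$, obtained by peeling off the $k = i+1$ term in the closed form of Definition~\ref{DEF:phistar} and re-indexing. Writing $a_j \equiv \phi^*(j,j)$, so that $a_n = \ell_n$ and the recursion reads $p_j(x^*_j) \cdot a_j = a_{j-1} - \ell_{j-1}$, I would then verify balance $\sum_{j \geq i} \phi^*(i,j) = \sum_{j \geq i} \ell_j$ for every disruptor $i$. Using $\phi^*(i,j) = (1 - p_j(x^*_j)) \cdot a_j$ for $i < j$, each indirect term becomes $(1 - p_j(x^*_j)) a_j = a_j - a_{j-1} + \ell_{j-1}$, so the sum $\sum_{j > i}(1 - p_j(x^*_j)) a_j$ telescopes to $(a_n - a_i) + \sum_{i \leq m < n} \ell_m$; adding the direct liability $a_i$ and substituting $a_n = \ell_n$ recovers exactly $\sum_{j \geq i} \ell_j$. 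With balance in hand, $\phi^*$ is a solution, it satisfies \textit{independent indirect liabilities} by construction (its indirect liabilities $\phi^*(i,j)$ do not depend on the disruptor $i$), and since its direct liabilities are precisely those of Theorem~\ref{TH:implementation}, that theorem certifies that $\phi^*$ implements $x^*$.

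For the ``only if'' direction, suppose $\phi$ satisfies \textit{independent indirect liabilities} and implements $x^*$. Theorem~\ref{TH:implementation} immediately forces $\phi(i,i) = \phi^*(i,i)$ for every $i$, so $\phi$ and $\phi^*$ agree on direct liabilities. For the indirect liabilities, \textit{independent indirect liabilities} lets me write $\phi(i,j) = b_j$ for a value depending only on $j$. Balance for disruptor $i$ gives $\phi(i,i) + \sum_{j > i} b_j = \sum_{j \geq i} \ell_j$, and subtracting the analogous identity for disruptor $i+1$ telescopes the $b$-sums to yield $b_{i+1} = \ell_i - \phi(i,i) + \phi(i+1,i+1)$ for each $i = 1, \dots, n-1$. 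Since the right-hand side involves only the already-fixed direct liabilities, every $b_j$ is uniquely determined; as $\phi^*$ is itself a solution satisfying \textit{independent indirect liabilities} with these same direct liabilities, its indirect liabilities satisfy the identical relations, whence $b_j = \phi^*(i,j)$ and therefore $\phi = \phi^*$.

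The main obstacle I anticipate is exactly the balance verification for $\phi^*$: it is the one step not handed to us by an earlier result, and it must be done directly via the recursion and telescoping sum described above. Everything else is bookkeeping — deriving the recursion, and running the complementary telescoping cancellation in the ``only if'' uniqueness argument — so once balance is settled the proposition follows cleanly from Theorem~\ref{TH:implementation}.
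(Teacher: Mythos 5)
Your proof is correct and follows essentially the same route as the paper's (which is only sketched there): the recursion $\phi^*(i,i) = \ell_i + p_{i+1}(x^*_{i+1})\cdot\phi^*(i+1,i+1)$ combined with balance and \emph{independent indirect liabilities} pins down every entry once Theorem~\ref{TH:implementation} fixes the direct liabilities. Your explicit telescoping verification that $\phi^*$ is balanced fills in a step the paper leaves implicit, and your observation that Proposition~\ref{PR:characterization} cannot be cited to supply that balance is well taken.
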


\begin{proof}
    We omit a detailed proof as it follows the same steps as the second part of Proposition~\ref{PR:characterization}'s proof.
    The key observation is that, with $\pi_{n+1} \equiv 0$, we again have 
    \[
        \phi(j,j) = \ell_j + \pi_{j+1} \cdot \phi(j+1,j+1)
    \]
    with $\pi_i = p_i(x^*_i)$ for each agent $i$ and~$j$.
    To sketch the argument, first-best implementation gives $\phi(n,n)$ and $\phi(n-1,n-1)$;
    balance then gives $\phi(n-1,n)$;
    \textit{independent indirect liabilities} gives $\phi(n-2,n)$;
    first-best implementation gives $\phi(n-2,n-2)$; 
    balance gives $\phi(n-2,n-1)$;
    and so on.
\end{proof}

Proposition~\ref{PR:supermodularity} showed that the game was one of strategic complements:
the more earlier agents invest, the more later agents will invest as well.
A different but related question is how later expected costs are affected by earlier investments.
Agent $i$'s investment has two effects on agent $j$'s expected cost.
Positively, it decreases the likelihood that $i$ fails and that $j$ is indirectly liable for~$\phi(i,j)$.
Negatively, it increases the likelihood that $j$ may become the first to fail and thereby directly liable for~$\phi(j,j)$.
The former is smaller but the latter is heavier discounted.
This suggests a potential balancing point were the two offset.
Theorem~\ref{TH:crossEffect} shows that this holds true in the equilibrium under~$\phi^*$.

This implies a striking feature of this equilibrium:
not only does agent $i$ best respond to $x^*_{-i}$ through $x^*_i$, but even if agent $k$ \emph{would get to select $i$'s investment}, $x^*_i$ would be optimal to $k$ given $x^*_{-i}$.
We are not merely aligning $i$'s incentives with minimizing $\mathbb{C}$, we are even aligning it with minimizing every $C_k$.

\begin{theorem} \label{TH:crossEffect}
    A solution $\phi$ satisfies $\partial C_k(x^*; \phi) / \partial x_i = 0$ for each $i \leq k$ if and only if $\phi = \phi^*$.
\end{theorem}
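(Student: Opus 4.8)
The plan is to separate the conditions according to whether $i=k$ or $i<k$, writing $q_h = p_h(x_h^*)$ throughout (each $q_h>0$, since $x_h^*>0$ by Proposition~\ref{PR:efficiency} and $p_h$ is increasing with $p_h(0)=0$). For the own-derivative case $i=k$, the condition $\partial C_k(x^*;\phi)/\partial x_k = 0$ is exactly agent $k$'s first-order condition at $x^*$. Since $x^*$ minimizes $\mathbb{C}$ we have $\partial \mathbb{C}(x^*)/\partial x_k = 0$, so matching the two is, by the computation in the proof of Theorem~\ref{TH:implementation}, equivalent to $\phi(k,k)$ taking the first-best value $\phi^*(k,k)$. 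Hence the $n$ own-derivative conditions hold if and only if every direct liability agrees with $\phi^*$.

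The substance lies in the cross-derivatives $i<k$. I would compute $\partial C_k(x^*;\phi)/\partial x_i$ directly from the definition of $C_k$, observing that $x_i$ enters in three groups of terms: the indirect-liability term indexed by $j=i$ (through the factor $1-p_i(x_i)$), the indirect-liability terms indexed by $i<j<k$ (through $\prod_{h<j} p_h(x_h)$), and the direct-liability term (through $\prod_{h<k} p_h(x_h)$). Differentiating, factoring out the common strictly positive factor $p_i'(x_i^*)/p_i(x_i^*)$, and dividing by the positive product $\prod_{h\le i} q_h$, the condition $\partial C_k(x^*;\phi)/\partial x_i=0$ reduces to
\[
\phi(i,k) = \sum_{i<j<k}\Big(\prod_{i<h<j} q_h\Big)(1-q_j)\,\phi(j,k) + \Big(\prod_{i<h<k} q_h\Big)(1-q_k)\,\phi(k,k). \quad (\ast)
\]

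With the direct liabilities already fixed to $\phi^*(k,k)$ by the $i=k$ case, I would solve $(\ast)$ for the indirect liabilities by downward induction on $i$ (from $i=k-1$ to $i=1$) with $k$ fixed. The base case $i=k-1$ has an empty sum and empty product and yields $\phi(k-1,k)=(1-q_k)\phi(k,k)$ immediately. For the inductive step, substituting the hypothesis $\phi(j,k)=(1-q_k)\phi(k,k)$ for $i<j<k$ into $(\ast)$ and writing $Q_j=\prod_{i<h<j} q_h$, the identity $Q_j(1-q_j)=Q_j-Q_{j+1}$ makes the bracketed coefficient telescope: $\sum_{i<j<k}(Q_j-Q_{j+1})+Q_k = Q_{i+1}=1$. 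Hence $\phi(i,k)=(1-q_k)\phi(k,k)=\phi^*(i,k)$, so all indirect liabilities are forced into the $\phi^*$ form and $\phi=\phi^*$. For the converse, the very same telescoping shows $\phi^*$ satisfies $(\ast)$ for every $i<k$, while its direct liabilities satisfy the $i=k$ conditions by Theorem~\ref{TH:implementation}.

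The main obstacle is the cross-derivative computation itself: correctly accounting for the three ways $x_i$ enters $C_k$ (in particular the negative sign of the $j=i$ term) and recognizing the telescoping structure that collapses the coefficient in $(\ast)$ to $1$. Once $(\ast)$ is in hand, both the induction and the converse are routine.
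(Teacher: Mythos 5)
Your proof is correct and follows essentially the same route as the paper's: the cross-derivative identity you call $(\ast)$, the telescoping of $\prod_{i<h<j}q_h(1-q_j)$, and the downward induction on $i$ are exactly the paper's argument, and your forward direction is the same computation read in reverse. The only (harmless) difference is that you pin down the direct liabilities from the $i=k$ conditions directly via the computation in Theorem~\ref{TH:implementation}, whereas the paper first sums $\partial C_k/\partial x_i$ over $k$ to get $\partial\mathbb{C}(x^*)/\partial x_i=0$ and then invokes Theorem~\ref{TH:implementation}.
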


\begin{proof}
    We show first that $\phi^*$ satisfies the conditions.
    By Theorem~\ref{TH:implementation}, that $x^*$ is an equilibrium under $\phi^*$, this holds for $i = k$.
    Let instead $i < k$.
    Recall that 
    \[
        C_k(x^*; \phi^*)
        = \Big ( 1 - \prod_{j < k} p_j(x^*_j) \Big ) \cdot \phi^*(i,k) + \prod_{j < k} p_j(x^*_j) (1 - p_k(x^*_k)) \cdot \phi^*(k,k) + x^*_k.
    \]
    Differentiating and using that $\phi^*(i,k) = (1 - p_k(x^*_k)) \cdot \phi^*(k,k)$,
    \begin{align*}
        \frac{\partial C_k(x^*; \phi^*)}{\partial x_i}
        &= - \frac{p_i'(x^*_i)}{p_i(x^*_i)} \prod_{j < k} p_j(x^*_j) \cdot \phi^*(i,k) + \frac{p_i'(x^*_i)}{p_i(x^*_i)} \prod_{j < k} p_j(x^*_j) (1 - p_k(x^*_k)) \cdot \phi^*(k,k) \\
        &= \frac{p_i'(x^*_i)}{p_i(x^*_i)} \prod_{j < k} p_j(x^*_j) \cdot \big ( (1 - p_k(x^*_k)) \cdot \phi^*(k,k) - \phi^*(i,k) \big ) = 0. 
    \end{align*}

    We turn to the other direction.
    That is, let $\phi$ be such that $\partial C_k(x^*; \phi) / \partial x_i = 0$ for each $i \leq k$ (this is trivially zero for $i > k$ as well).
    By balance, $\mathbb{C}(x^*) = \sum_i C_i(x^*)$;
    hence,
    \[
        \frac{\partial \mathbb{C}(x^*)}{\partial x_i}
        = \frac{\partial \sum_k C_k(x^*)}{\partial x_i}
        = \frac{\sum_k \partial C_k(x^*)}{\partial x_i}
        = 0.
    \]
    That is to say, $\phi$ implements $x^*$.
    It follows from Theorem~\ref{TH:implementation} that $\phi(k,k) = \phi^*(k,k)$ for each agent $k$. 
    Recall now that
    \[
        C_k(x^*; \phi)
        = \sum_{j < k} \prod_{i < j} p_i(x^*_i) (1 - p_j(x^*_j)) \cdot \phi(j,k) + \prod_{i < k} p_i(x^*_i) (1 - p_k(x^*_k)) \cdot \phi(k,k) + x^*_k.
    \]
    Hence,
    \begin{align*}
        \frac{\partial C_k(x^*; \phi)}{\partial x_i}
        = 
        &- \frac{p'_i(x_i)}{p_i(x_i)} \prod_{h \leq i} p_h(x^*_h) \cdot \phi(i,k) 
        + \frac{p'_i(x_i)}{p_i(x_i)} \sum_{i < j < k} \prod_{h < j} p_h(x^*_h) (1 - p_j(x^*_j)) \cdot \phi(j,k) \\
        &+ \frac{p'_i(x_i)}{p_i(x_i)} \prod_{h < k} p_h(x^*_h) (1 - p_k(x^*_k)) \cdot \phi(k,k).
    \end{align*}

    We have concluded that $\phi(k,k) = \phi^*(k,k)$.
    By definition, $\phi^*(i,k) = (1 - p_k(x^*_k)) \cdot \phi^*(k,k)$.
    
    We can now proceed by induction.
    Assume $\phi(j,k) = \phi^*(j,k) = \phi^*(i,k)$ for each $j > i$.
    We want to show that $\phi(i,k) = \phi^*(i,k)$.
    We have
    \begin{align*}
        \frac{\partial C_k(x^*; \phi)}{\partial x_i}
        = 
        &- \frac{p'_i(x_i)}{p_i(x_i)} \prod_{h \leq i} p_h(x^*_h) \cdot \phi(i,k) 
        + \frac{p'_i(x_i)}{p_i(x_i)} \sum_{i < j < k} \prod_{h < j} p_h(x^*_h) (1 - p_j(x^*_j)) \cdot \phi^*(i,k) \\
        &+ \frac{p'_i(x_i)}{p_i(x_i)} \prod_{h < k} p_h(x^*_h) \cdot \phi^*(i,k).
    \end{align*}
    We hone in on the telescoping series in the second term.
    Define the series $u = (u_j)_{i < j < k}$ and $t = (t_j)_{i < j \leq k}$ through
    \[
        u_j 
        = \prod_{h < j} p_h(x^*_h) (1 - p_j(x^*_j)) 
        = \prod_{h < j} p_h(x^*_h) - \prod_{h < j+1} p_h(x^*_h) 
        = t_j - t_{j+1}.
    \]
    Then
    \begin{align*}
        \sum_{i < j < k} \prod_{h < j} p_h(x^*_h) (1 - p_j(x^*_j))
        &= \sum_{i < j < k} u_j
        = \sum_{i < j < k} ( t_j - t_{j+1} )
        = t_{i+1} - t_k \\
        &= \prod_{h \leq i} p_h(x^*_h) - \prod_{h < k} p_h(x^*_h).
    \end{align*}
    It follows then that
    \[
        \frac{\partial C_k(x^*; \phi)}{\partial x_i}
        = 
        \frac{p'_i(x_i)}{p_i(x_i)} \prod_{h \leq i} p_h(x^*_h) (\phi^*(i,k) - \phi(i,k)).
    \]
    As each $x^*_h$ and $p_h(x^*_h)$ is positive, this is zero for $\phi^*(i,k) = \phi(i,k)$.
\end{proof}

Equilibrium costs have a simple form under~$\phi^*$.
For a solution $\phi$ with \textit{independent indirect liabilities} (such as $\phi^*$) and agents $i < j$, we have
\[
    C_j(x; \phi)
    = \Big ( 1 - \prod_{i < j} p_i(x_i) \Big ) \cdot \phi(i,j) + \prod_{i < j} p_i(x_i) (1 - p_j(x_j)) \cdot \phi(j,j) + x_j.
\]
In particular, under $\phi = \phi^*$, we have $(1 - p_j(x^*_j)) \cdot \phi^*(j,j) = \phi^*(i,j)$ for $i < j$.
Hence, the expression simplifies to $C_j(x^*_j, x_{-j}; \phi^*) = C_j(x^*; \phi^*) = \phi^*(i,j) + x^*_j$.
That is, equilibrium expected costs equal the agent's indirect liability together with their investment.

\begin{remark}[Simulations]
    To get better insights on how $\phi^*$ assigns liabilities and the associated expected equilibrium costs, we conduct a simulation study with $n = 8$ agents.
    Losses $\ell$ are drawn independently and uniformly from $1$ to $100$.
    Agents have access to symmetric technologies with
    \[
        p_i(x_i) = \frac{\sqrt{x_i}}{1 + \sqrt{x_i}}.
    \]
    Efficient investments $x^*$ together with the solution $\phi^*$ are computed and averaged over $10,000$ such randomly generated instances.
    The main findings are presented in Figure~\ref{FIG:sims}.

    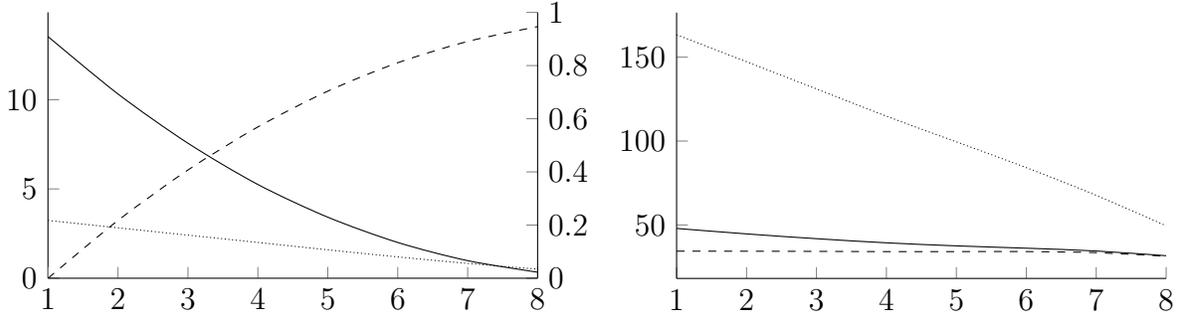
\begin{figure}[!htb]
        \centering
        \begin{tikzpicture}
            \begin{axis}[width=.49\textwidth, height=.31\textwidth, xmin = 1, xmax = 8, ymin = 0, axis y line*=left, hide x axis]
                \addplot[smooth] table [skip first n=1, x index = {0}, y index = {3}, col sep=comma] {data.csv}; 
            \end{axis}
            \begin{axis}[width=.49\textwidth, height=.31\textwidth, xmin = 1, xmax = 8, ymin = 0, ymax = 1, axis y line*=right, axis x line*=bottom]
                \addplot[smooth,densely dotted] table [skip first n=1, x index = {0}, y index = {4}, col sep=comma] {data.csv}; 
                \addplot[smooth,dashed] table [skip first n=1, x index = {0}, y index = {5}, col sep=comma] {data.csv}; 
            \end{axis}
        \end{tikzpicture}
        \begin{tikzpicture}[] %
            \begin{axis}[width=.49\textwidth, height=.31\textwidth, xmin = 1, xmax = 8, axis y line*=left, axis x line*=bottom]
                \addplot[smooth,densely dotted] table [skip first n=1, x index = {0}, y index = {1}, col sep=comma] {data.csv}; 
                \addplot[smooth,dashed] table [skip first n=1, x index = {0}, y index = {2}, col sep=comma] {data.csv}; 
                \addplot[smooth] table [skip first n=1, x index = {0}, y index = {6}, col sep=comma] {data.csv}; 
            \end{axis}
        \end{tikzpicture}
        \caption{Left: Investments (solid) and probabilities of becoming directly (dotted, right axis) or indirectly (dashed) liable.
        Right: Expected costs (solid) together with direct (dotted) and indirect (dashed) liabilities.}
        \label{FIG:sims}
    \end{figure}
    
    On the left, we see that investments are decreasing.
    (This is a consequence that holds generally when technologies are equal.)
    Naturally, later agents are more likely to be indirectly liable but also less likely to be directly liable (despite their lower investment).
    On the right, expected costs and direct liabilities are decreasing, although the former is quite shallow.
    Finally, indirect liabilities are surprisingly constant (however, this may be an artefact of our parameter choices).
    Hence, at least here, differences in expected costs are mainly driven by investment differences. 
    \hfill \textit{End of remark}
\end{remark}

\subsection{Efficiency loss}

A way to interpret Theorem~\ref{TH:crossEffect} is that $\phi^*$ induces a pure coordination game in which each agent wants all others to coordinate on $x^*$.
This incentive structure is unique to~$\phi^*$.
For contrast, compare with the ``disruptor-pays'' solution $\widehat\phi$, which assigns full responsibility to the first unsuccessful agent.
As indirect liabilities are zero, later agents want earlier agents to fail.
That is to say, each agent $k$ prefers agent $1$ to invest zero.
However, because of the large direct liability $\widehat\phi(1,1)$, agent $1$ chooses to invest a large amount that exceeds the efficient~$x^*_1$.
But given this, agent $1$ is more likely to succeed and agent $2$ more likely to turn active, so agent $2$ also overinvestments, and so on.
These excess investments can lead to large efficiency losses.
Indeed, Theorem~\ref{TH:POA} shows that the efficiency loss is unbounded. 

With some abuse of notation, let $\phi(\ell, p) \in X$ denote the unique equilibrium profile in the game induced by~$\phi$ given problem $\langle \ell, p \rangle$.
By Proposition~\ref{PR:supermodularity}, this is well-defined.
Note that this is the only part of the analysis in which the problem $\langle \ell, p \rangle$ and the set of agents $N$ is not taken as given.

\begin{theorem} \label{TH:POA}
    For each bound $B > 0$, there is a problem $\langle \ell, p \rangle$ such that
    \[
    \mathbb{C}(\widehat \phi (\ell, p)) > B \cdot \mathbb{C}(\phi^*(\ell,p)).
    \]
\end{theorem}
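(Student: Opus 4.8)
The plan is to bound the two sides of the inequality separately and then exhibit a family of problems on which they diverge. For the denominator, since $\phi^*$ is first-best it implements the efficient profile, so $\mathbb{C}(\phi^*(\ell,p)) = \mathbb{C}(x^*)$; comparing $x^*$ against the zero profile (where every agent fails and $\mathbb{C}(0,\dots,0) = \sum_j \ell_j$) gives the clean upper bound $\mathbb{C}(\phi^*(\ell,p)) \le \sum_j \ell_j$. For the numerator, dropping the nonnegative loss term in the total cost gives $\mathbb{C}(\widehat\phi(\ell,p)) \ge \sum_j \hat x_j$, where $\hat x = \widehat\phi(\ell,p)$ is the disruptor-pays equilibrium. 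It therefore suffices to construct, for arbitrarily large target ratios, a problem whose total loss $\sum_j \ell_j$ stays bounded while the total equilibrium investment $\sum_j \hat x_j$ under $\widehat\phi$ grows without bound; this is the formal content of the ``total investments exceed total losses'' intuition.

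Next I would pin down $\hat x$. Because $\widehat\phi$ assigns no indirect liability, agent $j$'s cost is $\prod_{h<j} p_h(\hat x_h)(1 - p_j(x_j)) \cdot \sum_{k \ge j} \ell_k + x_j$, so by the dominance-solvability in Proposition~\ref{PR:supermodularity} the equilibrium is the sequential solution of the first-order conditions $P_{j-1}\, p'_j(\hat x_j)\, S_j = 1$, where $P_{j-1} = \prod_{h<j} p_h(\hat x_h)$ is the cumulative survival probability and $S_j = \sum_{k \ge j} \ell_k = \widehat\phi(j,j)$. The design idea is to place almost all of a fixed total loss on the last agent, so that every $S_j$ is close to $\sum_k \ell_k$; then every agent carries a large direct liability under $\widehat\phi$ and over-invests to avoid being the disruptor. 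The remaining freedom is the technology, which I would take common to all agents and tailored to $n$: steep at the origin (to meet $p'(0) = \infty$ and strict concavity) but shaped so that $p$ rises to within $O(1/n)$ of $1$ by a fixed investment level $c$, while $p'(c)$ is held at a fixed positive value large enough that the first-order conditions place the equilibrium investments beyond $c$.

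With this setup the core is an induction showing that the chain ``stays active''. Assuming $p(\hat x_i) \ge 1 - O(1/n)$ for all $i < j$ yields $P_{j-1} \ge \prod (1 - O(1/n))$, which stays bounded away from zero uniformly in $n$ (by Bernoulli's inequality, since $(1-\eta/n)^{n-1}\ge e^{-\eta}$); combined with $S_j$ close to the total loss, the first-order condition then forces $p'(\hat x_j) < p'(c)$, hence $\hat x_j > c$ and $p(\hat x_j) \ge p(c) = 1 - O(1/n)$, closing the induction (the base case being agent~$1$, for whom $P_0 = 1$). Consequently all $n$ agents invest at least the fixed amount $c$, so $\sum_j \hat x_j \ge c n$ while $\sum_j \ell_j$ is a fixed constant; choosing $n$ large then makes $\mathbb{C}(\widehat\phi(\ell,p)) \ge c n \ge c n \cdot \mathbb{C}(\phi^*(\ell,p))$ exceed $B \cdot \mathbb{C}(\phi^*(\ell,p))$.

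The main obstacle is precisely this last step: reconciling two competing demands on the technology. Making each agent over-invest a non-negligible amount requires $p'$ to remain sizeable away from the origin, whereas keeping $P_{j-1}$ bounded below across all $n$ agents requires each $p(\hat x_j)$ to be within $O(1/n)$ of $1$; and the constraint $p \le 1$ (so that $p'$ is integrable, which already caps each $\hat x_j$ by its own liability $S_j$) makes these pull against one another. The resolution is a technology whose derivative falls from $+\infty$ to a fixed positive value by the level $c$ and only then collapses to zero over a window of width $O(1/n)$; verifying that such a $p$ is admissible (strictly increasing, strictly concave, differentiable, with $p'(0)=\infty$) and that the inductive constants can be met simultaneously is the delicate part, although it reduces to elementary estimates once the shape of $p$ is fixed.
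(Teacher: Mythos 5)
Your proposal is correct and follows essentially the same route as the paper's proof: bound $\mathbb{C}(\phi^*(\ell,p))$ above by the total loss via the zero profile, bound $\mathbb{C}(\widehat\phi(\ell,p))$ below by total equilibrium investment, concentrate nearly all of a bounded loss on the last agent, and use a common concave technology calibrated so that a forward induction on the first-order conditions $P_{j-1}\,p'(\hat x_j)\,\widehat\phi(j,j)=1$ forces every agent past a fixed investment threshold, whence the ratio grows like $n$. The paper's choices $\ell=(\varepsilon,\dots,\varepsilon,1)$ and $p(1-\varepsilon)=1-\delta$ with $(1-\delta)^{n+1}=1-\varepsilon$ are just a concrete instance of your $O(1/n)$ calibration, and your closing discussion of reconciling $p'(c)$ bounded away from zero with $p\le 1$ matches the paper's remark about twisting the linear technology above $1-\varepsilon$.
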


\begin{proof}
    We first describe the structure of the problems $\langle \ell, p \rangle$ that will be used.
    
    Let $\varepsilon > 0$, $n \in \mathbb{N}$, and $\ell = (\varepsilon, \dots, \varepsilon, 1) \in \mathbb{R}^n_{>0}$.
    Hence, with $\varepsilon$ small, the loss to avoid is $\ell_n = 1$, but doing so requires all to succeed.
    We will construct $p$ such that $\phi^*$ incentivizes essentially zero investments---which will be efficient---whereas $\widehat\phi$ incentivizes essentially unit investment by all. 
    By Theorem~\ref{TH:implementation}, $\phi^*$ is a first-best solution, so $x^* = \phi^*(\ell, p)$ is efficient and $\mathbb{C}(x^*) < \mathbb{C}(0,\dots,0) = \sum \ell_i = 1 + (n-1) \varepsilon$.
    Analogously, let $\widehat x = \widehat\phi(\ell, p)$ be the equilibrium profile in the game induced by $\widehat\phi$ for $\langle \ell, p \rangle$.
    The remainder of the proof will be to construct $p$ such that, for each agent~$i$, $\widehat x_i > 1 - \varepsilon$.
    Then $\mathbb{C}(x^*) < \mathbb{C}(0,\dots,0) \to 1$ as $\varepsilon \to 0$ whereas $\mathbb{C}(\widehat x) > \sum_i \widehat x_i > n (1 - \varepsilon) \to n$ as $\varepsilon \to 0$.
    Hence, for any bound $B > 0$, there is a large-enough population $n > B$ and small-enough $\varepsilon > 0$ for which $\mathbb{C}(\widehat x) > B \cdot \mathbb{C}(x^*)$.

    Now, fix $\varepsilon > 0$, $n \in \mathbb{N}$, and define $0 < \delta < \varepsilon$ such that $(1 - \delta)^{n+1} = 1 - \varepsilon$.
    With some abuse of notation, construct $p_1 = \dots = p_n \equiv p$ such that $p(1 - \varepsilon) = 1 - \delta$ and
    \[
        1 
        < \frac{1}{(1 - \delta)^{n-1}}
        < p'(1 - \varepsilon)
        < \frac{1}{(1 - \delta)^n}.
    \]
    Note here that
    \[
        \frac{1}{(1 - \delta)^n}
        = \frac{1 - \delta}{1 - \varepsilon}
        = \frac{p(1 - \varepsilon)}{1 - \varepsilon}.
    \]
    That is, $p(1 - \varepsilon) > (1 - \varepsilon) \cdot p'(1 - \varepsilon)$, which is necessary for $p$ to be strictly concave.
    In essence, $p$ is obtained by appropriately twisting the linear $\tilde{p}(x) = x$ at $0$ and above $1 - \varepsilon$ to match all conditions assumed on the technology. 

    Agent $1$'s first-order condition is simply $p'(\widehat x_1) \cdot \widehat \phi(1,1) = 1$.
    As $\widehat\phi(1,1) = 1 + (n-1) \varepsilon > 1$, we have $p'(\widehat x_1) < 1 < p'(1-\varepsilon)$.
    As $p$ is concave, $\widehat x_1 > 1 - \varepsilon$.
    We proceed by induction.
    Suppose $\widehat x_i > 1 - \varepsilon$, so $p(\widehat x_i) > 1 - \delta$, for each $i < j$.
    For agent~$j$, the first-order condition is
    \[
        \frac{\partial C_j(\widehat x; \widehat\phi)}{\partial x_j} = 0
        \iff 
        \frac{p'(\widehat x_j)}{p(\widehat x_j)} \prod_{i \leq j} p(\widehat x_i) \cdot \widehat\phi(j,j) = 1.
    \]
    As $\widehat\phi(j,j) = 1 + (n-j) \varepsilon > 1$, we must have
    \[
        1 >
        \frac{p'(\widehat x_j)}{p(\widehat x_j)} \prod_{i \leq j} p(\widehat x_i)
        > p'(\widehat x_j) \cdot (1 - \delta)^{j-1}
        \geq p'(\widehat x_j) \cdot (1 - \delta)^{n-1}
        > \frac{p'(\widehat x_j)}{p'(1 - \varepsilon)},
    \]
    so $p'(\widehat x_j) < p'(1 - \varepsilon)$.
    Again, as $p$ is concave, we find that $\widehat x_j > 1 - \varepsilon$.
    By induction, this holds for each agent~$j$.
    The conclusion now follows along the lines sketched above.
\end{proof}

The main take-away from Theorem~\ref{TH:POA} is that the choice of solution can have a profound effect on total costs.
The straightforward approach to assign only direct liability and no indirect liabilities can be arbitrarily inefficient.
We remark also that \emph{underinvestment} is possible with the same potential for unbounded inefficiencies (e.g., for the solution $\phi$ with $\phi(i,j) = \phi(j,j) = \ell_j$ everywhere).
Moreover, the same proof can be used to obtain a similarly unbounded inefficiency in absolute terms (that is, $\mathbb{C}(\widehat \phi(\ell,p)) > B + \mathbb{C}(\phi^*(\ell,p))$).

\section{Concluding remarks} \label{SECT:conclusion}

We conclude by discussing some alternative interpretations of the model. 

\subsection{Systemic losses}

In the model, the loss $\ell_i$ is interpreted as agent $i$'s harm onto agent $i+1$ from failure to uphold their bilateral agreement.
However, this is but one interpretation.
For instance, the model can also capture sequential damage limitation, where the nature of the damages are more systemic rather than affecting a particular agent.
Imagine a fire starts in one apartment.
Step one may be to prevent the fire from spreading to other buildings, step two to prevent it from spreading within the building, and step three to put out the fire in the apartment.
A failure at any step leads to some systemic loss, label these $L_1 > L_2 > L_3$ accordingly.
We may then associate agent $i$ to the \emph{marginal} loss $\ell_i = L_i - L_{i+1} > 0$.
This is consistent with the supply chain-interpretation where $L_i = \sum_{j \geq i} \ell_j$ are the total losses incurred by $i$'s failure. 

\subsection{Tiered supply chains}

A natural extension is to permit several agents in the same ``tier'' of the supply chain.
We can represent this through a directed acyclic graph where nodes are agents (say, firms) and arcs represent bilateral agreements with associated potential losses.
Figure~\ref{FIG:tiers} illustrates several examples.

\begin{figure}[!htb]
    \centering
    \begin{tikzpicture}[scale=.8]
        \def\x{.57735};
        \draw (0,3) circle (2pt) node (03) {}; 
        \draw (-\x,2) circle (2pt) node (02) {}; 
        \draw (\x,2) circle (2pt) node (12) {}; 
        \draw (-2*\x,1) circle (2pt) node (01) {}; 
        \draw (0,1) circle (2pt) node (11) {}; 
        \draw (2*\x,1) circle (2pt) node (21) {}; 
        \draw (-3*\x,0) circle (2pt) node (00) {}; 
        \draw (-\x,0) circle (2pt) node (10) {}; 
        \draw (\x,0) circle (2pt) node (20) {}; 
        \draw (3*\x,0) circle (2pt) node (30) {}; 
        \path [->] (03) edge (02) edge (12); 
        \path [->] (02) edge (01) edge (11); 
        \path [->] (12) edge (21); 
        \path [->] (01) edge (00) edge (10); 
        \path [->] (11) edge (20); 
        \path [->] (21) edge (30); 
    \end{tikzpicture} \quad \quad
    \begin{tikzpicture}[scale=.8]
        \def\x{.57735};
        \draw (0,3) circle (2pt) node (03) {}; 
        \draw [fill=black!20] (-\x,2) circle (2pt) node (02) {}; 
        \draw [fill=black!20] (\x,2) circle (2pt) node (12) {}; 
        \node at (02) [anchor=south east] {$i$};
        \node at (12) [anchor=south west] {$j$};
        \draw (-2*\x,1) circle (2pt) node (01) {}; 
        \draw [fill=black!20] (0,1) circle (2pt) node (11) {}; 
        \draw (2*\x,1) circle (2pt) node (21) {}; 
        \node at (11) [anchor=north east] {$k$};
        \draw (-3*\x,0) circle (2pt) node (00) {}; 
        \draw (-\x,0) circle (2pt) node (10) {}; 
        \draw (\x,0) circle (2pt) node (20) {}; 
        \draw (3*\x,0) circle (2pt) node (30) {}; 
        \path [->] (03) edge (02) edge (12); 
        \path [->] (02) edge (01) edge (11); 
        \path [->] (12) edge (11) edge (21); 
        \path [->] (01) edge (00) edge (10); 
        \path [->] (11) edge (20); 
        \path [->] (21) edge (30); 
    \end{tikzpicture} \quad \quad
    \begin{tikzpicture}[scale=.8]
        \def\x{.57735};
        \draw [fill=black!20] (0,3) circle (2pt) node (03) {}; 
        \draw (2*\x,3) circle (2pt) node (13) {}; 
        \draw (-\x,2) circle (2pt) node (02) {}; 
        \draw (\x,2) circle (2pt) node (12) {}; 
        \draw (3*\x,2) circle (2pt) node (22) {}; 
        \draw (-2*\x,1) circle (2pt) node (01) {}; 
        \draw (0,1) circle (2pt) node (11) {}; 
        \draw (2*\x,1) circle (2pt) node (21) {}; 
        \draw (4*\x,1) circle (2pt) node (31) {}; 
        \draw (-\x,0) circle (2pt) node (10) {}; 
        \draw (\x,0) circle (2pt) node (20) {}; 
        \draw (3*\x,0) circle (2pt) node (30) {}; 
        \path [->] (03) edge (02) edge (12); 
        \path [->] (13) edge (12) edge (22); 
        \path [->] (02) edge (01) edge (11); 
        \path [->] (12) edge (21); 
        \path [->] (22) edge (21) edge (31); 
        \path [->] (01) edge (10); 
        \path [->] (11) edge (20); 
        \path [->] (21) edge (30); 
    \end{tikzpicture} 
    \caption{Generalized supply chain with multiple agents in the same ``tier''.
    Each arc represents a bilateral agreement with an associated potential loss.}
    \label{FIG:tiers}
\end{figure}
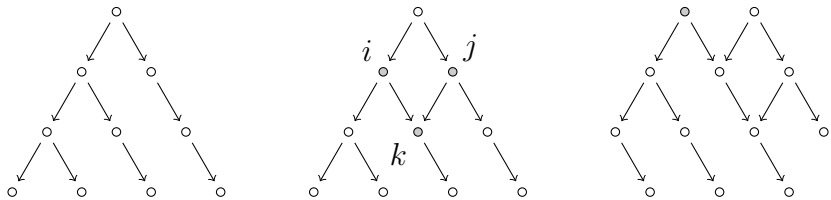

The extension closest to our linear setting is the tree structure illustrated in Figure~\ref{FIG:tiers} (left).
Here, the extended $\phi^*$ would operate in a way completely analogous to the linear case.
For example, say the root node is the disruptor (and, in consequence, each arc that follows fails).
We start with the leaf nodes at the end of the tree, determine their indirect liabilities, and then continue towards the root.
(Alternatively, it may be an arc rather than a node that is the disruptor.
For instance, if the top-left arc fails, then the right-side losses are not incurred.
Still, this only changes the losses to share;
the method is the same.)

A more complex extension is when the graph has a cycle (in the undirected sense). 
Compare how agents $i$ and $j$ both have agreements with agent $k$ in Figure~\ref{FIG:tiers} (middle).
The interpretation closest to ours is that a single failure suffices to trigger the cascade of losses:
that is to say, unless both $i$ and $j$ honor their respective agreements, $k$ will be unable to honor theirs.
It is no longer immediately clear how to define the solution corresponding to $\phi^*$ in the linear case.
In particular, when the root nodes causes the initial failure, it is not clear what shares $i$ and $j$ should carry of the loss that $k$ inflicts on the leaf node that follows. 
However, this ambiguity has to do with indirect liabilities only.
The first-best direct liabilities identified in Theorem~\ref{TH:implementation},
\[
    \phi^*(i,i) = \ell_i + \sum_{k > i} \prod_{i < j \leq k} p_j(x^*_j) \ell_k,
\]
still extend immediately to this setting with the relation ``$i < j$'' interpreted as ``there is a path from $i$ to~$j$ in the graph''.
That is to say, the root node's direct liability that incentivizes efficient investment likely still is completely analogous, but we now have more flexibility in setting indirect liabilities to affect overall payoff distribution.

Finally, Figure~\ref{FIG:tiers} (right) illustrates another difficulty that has to do with timing.
So far, it has been clear from the graph which nodes have failed and which have succeeded.
Now, suppose the top-left node fails.
Whether the top-right node should carry some of the losses clearly depends on whether they have succeeded or not, which we cannot deduce from the graph alone.
Hence, this information has to be added to the model.

\bibliography{bibliography}
\bibliographystyle{abbrvnat}

\end{document}